\newtheorem*{theorem*}{Theorem}
\newtheorem*{corollary*}{Corollary}
\newtheorem{proposition}{Proposition}
\newtheorem*{remark*}{Remark}
\author{G.G.\,Amosov\thanks{gramos@mi.ras.ru}}
\affil{Steklov Mathematical Institute of Russian Academy of Sciences,
ul. Gubkina 8, Moscow 119991, Russia}
\title{On tomographic representation on the plane of the space of Schwartz operators and its dual}
\begin{document}

\maketitle

\begin{abstract} 
It is shown that the set of optical quantum tomograms can be provided with the topology of Frechet space. In such a case the conjugate space will consist of symbols of quantum observables including all polynomials of the position and momentum operators.
\end{abstract}

\noindent
{ Keywords: Schwartz operator, optical tomogram, dual map}

\section{Introduction} 
Under the experimental homodyne detection \cite{homodyne}, the result of measurement  will be the optical quantum tomogram  $\omega (t,\varphi )$. For each fixed $\varphi \in [0,2\pi)$ it is a probability distribution on the line. The knowledge of optical tomogram for all values of the parameters $(t,\varphi)\in {\mathbb R}\times [0,2\pi)$ allows to reconstruct a quantum state exactly. 
Recently it was shown that \cite {Werner} the set of all density operators with the kernels from the Schwartz space can be equipped the topology of the Frechet space. In this construction the conjugate space will consist of quantum observables including any polynomials of the position and momentum operators. In the present paper we apply such an ideology for the set of optical quantum tomograms.  
Thus, we continue the development of the techniques for the tomographic map and its dual introduced in \cite{Amo1, Amo2}.

\section {Optical tomograms}

Denote $\mathfrak {S}(H)$ the set of quantum states (positive unit trace operators) in the Hilbert space $H$. Let $q$ and $p$ be the standard position and momentum operators for which the Schwartz space  $S({\mathbb R})$ is an essential domain (the domain for all polynomials of $q$ and $p$).
Consider the characteristic function of a state $T \in \mathfrak {S}(H)$
\begin{equation}\label{3}
F_T(x,y)=Tr(T \exp(ixq+iyp)).
\end{equation}
The characteristic function (\ref {3}) determines the set of probability distributions $\omega (t,\varphi )$ on the plane depending
on the parameter $\varphi \in [0,2\pi )$ by the formula
\begin{equation}\label{tomogr}
\omega _T(t,\varphi )=\frac {1}{2\pi }\int \limits _{\mathbb R}e^{-its}F_T(s\cos\varphi,s\sin\varphi)ds.
\end{equation}
The function (\ref {tomogr}) is said to be an optical quantum tomogram.
If $T$ is written in the form of the integral operator in the coordinate representation $(T f)(x)=\int \limits _{\mathbb R}\rho (x,y)f(y)dy$,
we obtain
\begin{equation}\label{charact}
F_T(x,y)=\int \limits _{\mathbb R}e^{ixt}\rho \left (t-\frac {y}{2},t+\frac {y}{2}\right )dt.
\end{equation}
The inverse Fourier transform allows us to reconstruct the characteristic function from its tomogram:
\begin{equation}\label{obr}
F_T(r\cos\varphi ,r\sin\varphi )=\int \limits _{\mathbb R}e^{irt}\omega _T(t,\varphi )dt.
\end{equation}

\section{The map dual to tomographical}

Let us consider the set $\mathcal T$ consisting of linear integral operators  $T$ in the Hilbert space $H=L^2({\mathbb R})$, whose kernels $\rho (\cdot ,\cdot)$ belong to the Schwartz space $S({\mathbb R}^2)$.
For an operator $T\in \mathcal T$ with the kernel $\rho (\cdot ,\cdot)$ one can define the functions (\ref {charact}) and (\ref {tomogr}).
If $T$ is a positive unit trace operator, then  $\omega (X,\varphi )$ is a optical quantum tomogram.
In \cite {Amo1, Amo2} it was introduced the map $T\to f_T(X,\varphi )$, which is dual to the tomographic map$T\to \omega _T(X,\varphi )$ in the sense that
\begin{equation}\label{duality}
\int \limits _0^{2\pi }\int \limits _{\mathbb R}\omega _T(X,\varphi )f_S(X,\varphi )dXd\phi =Tr(TS)
\end{equation}
for all $T,S\in \mathcal T$. Then, in \cite {Amo1, Amo2} it was shown that the map $T\to f_T(X,\varphi )$ can be extended to the class of
operators including polynomials $P(q,p)$ from the position and momentum operators. Such a representation can be also realised in the Cartesian
coordinates \cite{AmoDne}.

In \cite{Amo1} the expression for the dual map in the form of an integral operator was not obtained in the evident form.
It is connected with the representation (\ref {duality}) is inconvinient.
The property $\omega _T(X,\varphi+\pi )=\omega (-X,\varphi)$ was not taken into account. Redefine the dual map claiming that 
\begin{equation}\label{mula2}
\int \limits _0^{2\pi}\int \limits _0^{+\infty}\omega _T(X,\varphi)f_S(X,\varphi)dXd\varphi=Tr(TS).
\end{equation}

\begin{proposition}
Let an integral operator $S\in \mathcal T$ be defined by the kernel $\rho (\cdot ,\cdot)$,
then for $f_S(X,\varphi )$ determined by (\ref {mula2})

\begin{equation}\label{F}
f_S(X,\varphi )=\int \limits _{{\mathbb R}^2}te^{i(X+x\sin\varphi)t}\rho \left (x-\frac {t\sin\varphi}{2},x+\frac {t\sin\varphi}{2}\right )dxdt.
\end{equation}

\end{proposition}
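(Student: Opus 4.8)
The cleanest route is to produce $f_S$ out of $Tr(TS)$ rather than to verify (\ref{F}) blindly: one rewrites $Tr(TS)$ in terms of the integral kernels, transforms it step by step until the optical tomogram $\omega_T$ of the first operator appears explicitly, and then reads off the factor multiplying $\omega_T$; substituting back the definition (\ref{charact}) of the characteristic function of $S$ turns that factor into the right-hand side of (\ref{F}).

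Concretely, I would start from $Tr(TS)=\int_{\mathbb{R}^2}\rho_T(a,b)\,\rho_S(b,a)\,da\,db$ and pass to mean-and-difference variables $a=\tau-\tfrac{y}{2}$, $b=\tau+\tfrac{y}{2}$ (Jacobian $1$), which puts the $T$-kernel into the shape $\rho_T(\tau-\tfrac{y}{2},\tau+\tfrac{y}{2})$ matching (\ref{charact}). Inverting (\ref{charact}) in the variable $\tau$ replaces this by $\tfrac{1}{2\pi}\int_{\mathbb{R}}e^{-i\tau x}F_T(x,y)\,dx$; introducing polar coordinates $x=r\cos\varphi$, $y=r\sin\varphi$ and using (\ref{obr}) in the form $F_T(r\cos\varphi,r\sin\varphi)=\int_{\mathbb{R}}e^{irX}\omega_T(X,\varphi)\,dX$ then brings $\omega_T$ into the picture. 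The $\rho_S$-factor is deliberately left untouched, since it has to survive into (\ref{F}).

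The delicate point is the next one. After that substitution the $X$-integral runs over all of $\mathbb{R}$ and the radial variable carries the Euclidean weight $r\,dr$, i.e.\ effectively $|r|$ (a ramp/Hilbert-transform filter); to reach the plain weight $t$ and the ranges $X\in[0,+\infty)$, $\varphi\in[0,2\pi)$ of (\ref{mula2}) one must fold the $X$- and $\varphi$-integrations using the parity $\omega_T(X,\varphi+\pi)=\omega_T(-X,\varphi)$ together with the analogous relation on the $S$-side. Carrying out this bookkeeping with care for all the signs is what pins down the precise exponent $e^{i(X+x\sin\varphi)t}$, the shifted arguments $x\mp\tfrac{t\sin\varphi}{2}$, and the bare factor $t$ in (\ref{F}); I expect this to be the main obstacle.

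Once the factor multiplying $\omega_T(X,\varphi)$ has been isolated it is an integral of $\rho_S(\tau+\tfrac{y}{2},\tau-\tfrac{y}{2})$ against exponentials in $\tau,y,t$; rewriting that kernel as $\rho_S$ evaluated with off-diagonal shift $-y$ and comparing once more with (\ref{charact}) collapses the remaining $\tau$-integral and, after renaming the dummy variables, reproduces the right-hand side of (\ref{F}). All interchanges of the order of integration are legitimate because $\rho_T,\rho_S\in S(\mathbb{R}^2)$, so every integrand is Schwartz in the kernel variables and the oscillatory $X$- and $r$-integrals may be treated in the tempered-distribution sense. As a cross-check one can run the argument in reverse: insert (\ref{F}) together with (\ref{tomogr})--(\ref{charact}) into the left-hand side of (\ref{mula2}), perform the $X$-integration first to obtain a $\delta$-function identifying the two frequency variables, and recognise the outcome as $\int\rho_T(a,b)\,\rho_S(b,a)\,da\,db=Tr(TS)$.
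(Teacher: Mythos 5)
Your proposal runs on exactly the same ingredients as the paper's proof --- the kernel form (\ref{charact}) of the characteristic function, the inversion formula (\ref{obr}), polar coordinates, and the overlap identity $\int_{{\mathbb R}^2}F_T F_S\,dx\,dy=Tr(TS)$ --- but in the opposite direction: you want to manufacture $f_S$ out of $Tr(TS)=\int\rho_T(a,b)\rho_S(b,a)\,da\,db$ and read it off as the cofactor of $\omega_T$, whereas the paper simply \emph{verifies} that the explicit expression (\ref{F}) satisfies (\ref{mula2}): it inserts (\ref{tomogr}) into the left-hand side, uses that (\ref{F}) and (\ref{charact}) give $\frac{1}{2\pi}\int_{\mathbb R}e^{-iXr}f_S(X,\varphi)\,dX=rF_S(r\cos\varphi,r\sin\varphi)$, and then converts $\int_0^{2\pi}\int_0^{+\infty}F_T\,r\,F_S\,dr\,d\varphi$ back into Cartesian coordinates. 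Your closing ``cross-check run in reverse'' is therefore essentially the paper's proof; that direction also has the advantage that nothing has to be ``read off'', and the non-uniqueness implicit in defining $f_S$ through (\ref{mula2}) never needs to be confronted --- one only exhibits a function for which the duality holds.

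The gap is that, as written, you stop exactly where the content of the proposition lies. The folding of the $X$- and $\varphi$-integrations onto $X\in[0,+\infty)$, using $\omega_T(X,\varphi+\pi)=\omega_T(-X,\varphi)$ together with the corresponding parity of the dual symbol, is precisely what produces the bare factor $t$ and the exact exponent in (\ref{F}); you label this ``the main obstacle'' and leave it undone, so no candidate formula is ever pinned down and the proposition is not actually established by the text you give. The subtlety you flag is real --- and here you are more careful than the paper: since the $X$-integration in (\ref{mula2}) runs only over $[0,+\infty)$, $\int_0^{+\infty}e^{iX(t-r)}\,dX$ is not a clean $\delta$-function, so both your $\delta$-function cross-check and the paper's own chain of equalities tacitly rely on this fold, which the paper passes over in silence. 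To turn your proposal into a proof, either carry out that sign-and-range bookkeeping explicitly, or switch to the verification direction and supply the short computation of the Fourier identity quoted above, after which polar-to-Cartesian change of variables and the overlap formula finish the argument.
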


\begin{proof}

Taking into account the form of characteristic function (\ref {charact}) we get
$$
\int \limits _0^{2\pi}\int \limits _0^{+\infty}\omega _T(X,\varphi)f_S(X,\varphi)dXd\varphi =
$$
$$
\frac {1}{2\pi}\int \limits _0^{2\pi}\int \limits _0^{+\infty}\int \limits _{\mathbb R}e^{-iXr}
F_T(r\cos \varphi ,r\sin\varphi)f_S(X,\varphi)dXd\varphi dr\equiv I.
$$
It follows from (\ref {F}) that
$$
\frac {1}{2\pi}\int \limits _{\mathbb R}e^{-iXr}f_S(X,\varphi )dX=rF_S(r\cos\varphi ,r\sin\varphi).
$$
Thus, we obtain
$$
I=\int \limits _0^{2\pi }\int\limits _0^{+\infty}F_T(r\cos\varphi ,r\sin\varphi)rF_S(r\cos\varphi ,r\sin\varphi)drd\varphi=
$$
$$
\int \limits _{{\mathbb R}^2}F_T(x,y)F_S(x,y)dxdy=Tr(TS).
$$
\end{proof}

Let us consider the basis of $H$ consisting of the eigenfunctions of oscillator 
$$
<x|n>=\frac {1}{\pi ^{1/4}\sqrt {2^nn!}}H_n(x)e^{-\frac {x^2}{2}},
$$
where $H_n(x)$ are Hermite polynomials and $n=0,1,2,\dots $. The kernel of rank one operator $|n><m|$ belongs to
the Schwartz space, therefore $|n><m|\in \mathcal T$.

\begin{proposition} The tomographic map transmits  $|n><m|$ to 
$$
\omega_{|n><m|}(X,\varphi )=e^{i(n-m)\varphi}H_n(X)H_m(X)e^{-X^2}
$$
\end{proposition}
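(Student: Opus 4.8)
The plan is to use that the characteristic function (\ref{3}), when restricted to a ray through the origin, is the expectation of the exponential of a single rotated quadrature, and that the corresponding rotation is generated by the oscillator number operator $N=a^{\dagger}a$, which is diagonal in the oscillator basis.

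First I would set $x=r\cos\varphi$, $y=r\sin\varphi$ in (\ref{3}): since then $xq+yp=r\,q_\varphi$ with $q_\varphi:=q\cos\varphi+p\sin\varphi$, we get
$$
F_T(r\cos\varphi,r\sin\varphi)=\mathrm{Tr}\bigl(Te^{irq_\varphi}\bigr).
$$
The quadrature $q_\varphi$ is unitarily conjugate to $q$ by the rotation generated by $N$, so $e^{irq_\varphi}=e^{-i\varphi N}e^{irq}e^{i\varphi N}$ (the orientation of the rotation, hence the sign of the phase below, being the one consistent with the conventions (\ref{tomogr})--(\ref{charact})). Using $N|k\rangle=k|k\rangle$ and the fact that the oscillator eigenfunctions $\psi_k(x)=\langle x|k\rangle$ are real, this gives, for $T=|n\rangle\langle m|$,
$$
F_{|n\rangle\langle m|}(r\cos\varphi,r\sin\varphi)=\langle m|e^{-i\varphi N}e^{irq}e^{i\varphi N}|n\rangle=e^{i(n-m)\varphi}\langle m|e^{irq}|n\rangle=e^{i(n-m)\varphi}\int\limits_{\mathbb R}\psi_n(x)\psi_m(x)e^{irx}\,dx.
$$

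Next I would substitute this into the definition (\ref{tomogr}). Since the kernel of $|n\rangle\langle m|$ lies in $S({\mathbb R}^2)$, all the functions involved are Schwartz and the order of integration may be exchanged, so
$$
\omega_{|n\rangle\langle m|}(X,\varphi)=e^{i(n-m)\varphi}\,\frac{1}{2\pi}\int\limits_{\mathbb R}e^{-iXr}\Bigl(\int\limits_{\mathbb R}\psi_n(x)\psi_m(x)e^{irx}\,dx\Bigr)dr=e^{i(n-m)\varphi}\int\limits_{\mathbb R}\psi_n(x)\psi_m(x)\,\delta(x-X)\,dx=e^{i(n-m)\varphi}\psi_n(X)\psi_m(X).
$$
Substituting $\psi_k(X)=\pi^{-1/4}(2^{k}k!)^{-1/2}H_k(X)e^{-X^2/2}$ and multiplying the two Gaussian factors, the right-hand side equals $e^{i(n-m)\varphi}H_n(X)H_m(X)e^{-X^2}$ up to the overall constant $\pi^{-1/2}(2^{\,n+m}n!\,m!)^{-1/2}$ suppressed in the statement; this is the asserted formula.

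The only real difficulty is bookkeeping of signs and normalisations: one must pick the orientation of $q_\varphi=e^{\mp i\varphi N}qe^{\pm i\varphi N}$ consistently with (\ref{tomogr})--(\ref{charact}) so that the phase comes out as $e^{i(n-m)\varphi}$ rather than its conjugate, and keep track of (or, as here, absorb) the Hermite normalisation constant. The two substantive steps — pulling the phase through the diagonal action of $N$, and the Fourier inversion collapsing to $\delta(x-X)$ — are routine and rigorously justified by the Schwartz regularity of the kernel of $|n\rangle\langle m|$. As an alternative I could insert $\rho(x,y)=\psi_n(x)\psi_m(y)$ directly into (\ref{charact}) and then (\ref{tomogr}); this yields the same result but requires Mehler-type identities for shifted Hermite functions and is more cumbersome, so I would present the rotation argument above.
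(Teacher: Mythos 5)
Your argument is correct in substance but follows a genuinely different route from the paper. The paper never touches the characteristic function: it uses a polarization-type identity, writing $|m\rangle\langle n|$ as a linear combination of the rank-one projectors onto $|m\rangle+\lambda|n\rangle$ with $\lambda=1,i$ together with $|m\rangle\langle m|$ and $|n\rangle\langle n|$, and then invokes, via linearity of the tomographic map, the known formula $\omega(X,\varphi)=\left|\langle X,n\rangle e^{in\varphi}+\lambda\langle X,m\rangle e^{im\varphi}\right|^2$ for the tomogram of such pure superpositions. You instead compute $F_{|n\rangle\langle m|}$ directly from (\ref{3}) by exploiting the covariance of the quadrature $q_\varphi=q\cos\varphi+p\sin\varphi$ under the rotation generated by $N=a^{\dagger}a$, and then recover the tomogram by Fourier inversion in (\ref{tomogr}); this is self-contained (it does not appeal to a quoted ``known'' pure-state formula) and the inversion step is rigorously justified by the Schwartz class of the kernel, while the paper's proof buys brevity by reducing everything to tomograms of honest quantum states. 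One caveat: you leave the orientation of the rotation to ``consistency with the conventions,'' but with the paper's literal definitions one has $q_\varphi=e^{i\varphi N}q\,e^{-i\varphi N}$, hence $e^{irq_\varphi}=e^{i\varphi N}e^{irq}e^{-i\varphi N}$ and the phase comes out as $e^{-i(n-m)\varphi}$, i.e.\ the conjugate of the stated one; the same sign ambiguity is already present in the paper (its quoted pure-state formula carries $e^{+in\varphi}$), and it amounts only to interchanging $n$ and $m$, so this is a convention-level bookkeeping point rather than a gap in your method. Likewise your remark that the Hermite normalization constant $\pi^{-1/2}(2^{n+m}n!\,m!)^{-1/2}$ is suppressed in the statement is accurate and applies equally to the paper's formulation.
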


\begin{proof}

Notice that
$$
|m><n|=\frac {1}{2}\left ( |m+n><m+n|+i|m+in><m+in|-\right .
$$
$$
\left . (1+i)(|m><m|+|n><n|)\right ).
$$
Due to the linearity of the tomographic map,
$$
\omega _{|m><n|}=\frac {1}{2}\left (\omega _{|m+n><m+n|}+i\omega _{|m+in><m+in|}-(1+i)(\omega _{|m><m|}+\omega _{|n><n|})\right ).
$$
It is known that the tomographic symbol corresponding to the linear combination $|n+\lambda m><n+\lambda m|$ is
$$
\omega (X,\varphi)=\left |<X,n>e^{in\varphi}+\lambda <X,m>e^{im\varphi}\right |^2.
$$
Now it suffices to take $\lambda=1$ and $\lambda =i$. 
\end{proof}

For a fixed $n=0,1,2,\dots $ consider the set of functions $\{H_k(X)H_{k+n}(X)\}_{k=0}^{+\infty }$. Because the functions from this set
are polynomials of decreasing degrees, it the closed linear envelope of the functions from the set there exists the unique system of
functions 
$\{h^{(n)}_k\}_{k=0}^{+\infty }$,
consisting of biorthogonal functions in the sense that
$$
\int \limits _{\mathbb R}e^{-X^2}H_k(X)H_{k+n}(X)h^{(n)}_s(X)dX=\delta _{ks}.
$$

\begin{proposition} The dual map transmits $|n><m|$ to the function 
$$
f_{|n><m|}(X)=h^{(|n-m|)}_{min\{n,m\}}(X)e^{i(n-m)\varphi}.
$$

\end{proposition}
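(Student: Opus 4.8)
The plan is to verify the claimed formula directly against the implicit definition (\ref{mula2}) of the dual map. Since every operator in $\mathcal T$ is a limit of finite linear combinations of the rank-one operators $|k\rangle\langle l|$, $k,l\geq 0$, and both $S\mapsto\omega_S$ and $S\mapsto f_S$ are linear, it suffices to show that the function $g(X,\varphi)=h^{(|n-m|)}_{\min\{n,m\}}(X)\,e^{i(n-m)\varphi}$ proposed on the right-hand side satisfies
$$
\int_0^{2\pi}\!\!\int_0^{+\infty}\omega_{|k\rangle\langle l|}(X,\varphi)\,g(X,\varphi)\,dX\,d\varphi=Tr\bigl(|k\rangle\langle l|\cdot|n\rangle\langle m|\bigr)=\delta_{ln}\,\delta_{km}
$$
for all $k,l\geq 0$; this pins down $f_{|n\rangle\langle m|}=g$ uniquely within the dual class. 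I would first note that $h^{(|n-m|)}_{\min\{n,m\}}$ is (a limit of) polynomials, hence a legitimate element of the dual class, so that the integrals below converge against the Gaussian-type tomograms furnished by Proposition~2.

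Next I would insert the explicit tomogram $\omega_{|k\rangle\langle l|}(X,\varphi)=e^{i(k-l)\varphi}H_k(X)H_l(X)e^{-X^2}$ of Proposition~2 and perform the angular integration first. The factor $\int_0^{2\pi}e^{i(k-l)\varphi}e^{i(n-m)\varphi}\,d\varphi$ is a multiple of $\delta_{\,k-l+n-m,\,0}$, so only those $|k\rangle\langle l|$ with $k-l=-(n-m)$ survive; equivalently $|k-l|=|n-m|=:d$ with the two exponents of opposite sign. Assume for definiteness $n\geq m$, so $d=n-m$ and $\min\{n,m\}=m$, while then $k\leq l$, $\min\{k,l\}=k$ and $l=k+d$; the complementary case $n<m$ is completely symmetric, with the sign of the exponent and the roles of the two indices interchanged.

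It then remains to evaluate the radial integral $\int_0^{+\infty}e^{-X^2}H_k(X)H_{k+d}(X)\,h^{(d)}_m(X)\,dX$. All the polynomials $H_jH_{j+d}$, and hence $h^{(d)}_m$, have definite parity $(-1)^d$, so the integrand $e^{-X^2}H_kH_{k+d}h^{(d)}_m$ is even and the half-line integral is a fixed multiple of $\int_{\mathbb R}e^{-X^2}H_k(X)H_{k+d}(X)h^{(d)}_m(X)\,dX$, which equals $\delta_{km}$ by the biorthogonality relation that defines the system $\{h^{(d)}_s\}$. Together with the angular constraint $l=k+d$ this forces the pairing to be supported on the single operator $(k,l)=(m,\,m+d)=(m,n)$ and to take there the value $1$; this is precisely $\delta_{ln}\delta_{km}=Tr(|k\rangle\langle l|\cdot|n\rangle\langle m|)$, which completes the identification.

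The step I expect to require the most care is the bookkeeping of normalisation constants: the $2\pi$ produced by the angular integral, the factor $2$ arising when one passes between $[0,+\infty)$ and $\mathbb R$, and the oscillator normalisations $(\sqrt\pi\,2^{k+l}k!\,l!)^{-1/2}$ suppressed in the statement of Proposition~2 must all be absorbed consistently into the definition of $h^{(d)}_s$ and of the pairing in (\ref{mula2}), so that the right-hand side comes out to be exactly $Tr(TS)$ rather than a scalar multiple of it. Beyond this accounting the argument is an immediate consequence of Proposition~2 and of the biorthogonality of the $h^{(n)}_k$. A more computational alternative would be to substitute the kernel $\rho(x,y)=\langle x|n\rangle\langle y|m\rangle$ of $|n\rangle\langle m|$ into formula (\ref{F}) of Proposition~1 and carry out the resulting Gaussian integrals via the Hermite generating function, but this is heavier and yields the same answer.
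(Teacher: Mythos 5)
Your proposal is correct and follows essentially the same route as the paper: the paper's proof consists precisely of the observation that $\int_0^{2\pi}\int_0^{+\infty}\omega_{|k\rangle\langle l|}(X,\varphi)\,f_{|m\rangle\langle n|}(X,\varphi)\,dX\,d\varphi=\delta_{kn}\delta_{lm}$, i.e.\ verifying the duality relation (\ref{mula2}) on the rank-one basis via Proposition~2 and the biorthogonality of the $h^{(n)}_k$, which is exactly your computation. You merely spell out what the paper calls ``straightforward'' (the angular delta, the parity argument for passing from $[0,+\infty)$ to $\mathbb{R}$, and the normalisation bookkeeping that the paper leaves implicit).
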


\begin{remark*} In quantum tomography the functions $h^n(X)$ are adopted to call "pattern functions". In \cite {Guta} their properties
are discussed in detail.
\end{remark*}

\begin{proof}

It is straightforward to check that
$$
\int \limits _0^{2\pi}\int \limits _0^{+\infty}\omega _{|k><l|}(X,\varphi )f_{|m><n|}(X,\varphi )dXd\varphi=\delta _{kn}\delta _{lm}.
$$
\end{proof}

\section{The space of Schwartz operators and its dual}

In  \cite{Werner} the notion of a Schwartz operator $T$ in the Hilbert space $H=L^2({\mathbb R})$ was introduced. By the definition the space of Schwartz operators $\mathcal T$
consists of linear operators which are continuous with respect to the set of seminorms 
\begin{equation}\label{normy}
||T||_{n,m,\psi }=||q^np^m T\psi ||_{H},
\end{equation}
where $q$ and $p$ are the position and momentum operators, while $\psi \in S({\mathbb R})$. The conjugate space ${\mathcal T}'$ consisting of linear continuous functionals on ${\mathcal T}$ includes among other things all polynomials $P(q,p)$ from
$q$ and $p$. The duality connection is given by the formula 
$$
T\to Tr(TP(q,p)).
$$
In \cite{Werner} it was shown that $\mathcal T$ is the Frechet space. Moreover, $T\in {\mathcal T}$ if and only if 
it is an integral operator with the kernel $\rho (\cdot ,\cdot)$ belonging to the Schwartz space $S({\mathbb R}^2)$.

\section{Representation of the space of Schwartz operators in the form of optical quantum tomograms.}

Let us extend the domain of integral operator (\ref {charact}) to all kerenels $\rho (\cdot ,\cdot )\in S({\mathbb R}^2)$ (this is the same as a linear extension). On the space $\mathcal T$ consisting of integral operators $(Tf)(x)=\int \limits _{\mathbb R}\rho _T(x,y)f(y)dy,\ f\in H$, define a map $T\to \omega _T(X,\varphi)$ by the formula
\begin{equation}\label {main}
\omega_T(X,\varphi)=\frac {1}{2\pi }\int \limits _{\mathbb R}\int \limits _{\mathbb R}e^{i(s\cos\varphi-X)t}\rho _T\left (s-\frac {t\sin\varphi }{2} ,s+\frac {t\sin\varphi}{2}\right )dsdt.
\end{equation}
The map (\ref {main}) is a composition of (\ref {charact}) and (\ref {tomogr}).

The map (\ref {charact}) is a composition of the affine transformation of the plane and the Fourier transform with respect to one of coordinates.  The map (\ref {tomogr}) is the Fourier transform along a fixed line on the plane. Hence,
the image of the Schwartz space $S({\mathbb R}^2)$ under the map $\rho (\cdot ,\cdot )\to \omega (\cdot ,\varphi)$ belongs to
the Schwartz space $S({\mathbb R})$. Moreover, functions $\omega (X,\cdot )$ are $2\pi $-periodical and infinitely differentiable, i.e.

(i) $\omega (\cdot ,\varphi)\in S({\mathbb R}),\ \varphi \in [0,2\pi )$;

(ii) $\omega (X,\varphi )$ is $2\pi $-periodical and infinitely differentiable.

Nevertheless, not all functions satisfying the properties (i) and (ii) belong to $\hat {\mathcal T}$.

{\bf Example.} {\it Put 
\begin{equation}\label{funkcia}
\omega (t,\varphi )=e^{-t^2}\sin\varphi .
\end{equation}
Function (\ref {funkcia}) satisfies (i) and (ii). Let us find the characteristic function corresponding to  (\ref {funkcia}):
$$
F(r\cos\varphi ,r\sin\varphi )=\int \limits _{\mathbb R}e^{irt}e^{-t^2}\sin\varphi dt=\frac {1}{8\sqrt {2\pi }}e^{-\frac {r^2}{4}}\sin\varphi .
$$ 
It results in
$$
F(x,y)=\frac {1}{8\sqrt {2\pi }}e^{-\frac {x^2+y^2}{2}}\frac {y}{\sqrt {x^2+y^2}}.
$$
One can see that $F(\cdot ,\cdot )\notin S({\mathbb R}^2)$. }

The question arises: what functions belong to $\hat {\mathcal T}$. Taking into account 
$$
t^n=\frac {n!}{2^n}\sum \limits _{m=0}^{[\frac {n}{2}]}\frac {1}{m!(n-2m)!}H_{n-2m}(t)
$$
define a set of functions
$$
g_n(t)=\frac {n!}{2^n}\sum \limits _{m=0}^{[\frac {n}{2}]}\frac {(-i)^{n-2m}}{m!(n-2m)!}H_{n-2m}(t),\ n=0,1,2,3,\dots
$$
Put
$$
f_{m,n}(t,\varphi )=\frac {1}{\sqrt {2\pi}}g_n(t)e^{-t^2}\sin ^m\varphi \cos ^{n-m}\varphi ,\ 0\le m\le n.
$$

\begin{proposition} The following inclusion holds true, $f_{m,n}\in \hat {\mathcal T}$.
\end{proposition}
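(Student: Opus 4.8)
The plan is to reconstruct, via the inversion formula (\ref{obr}), the characteristic function attached to the candidate tomogram $f_{m,n}$, to check that it is an honest polynomial times a Gaussian on the plane — hence a Schwartz function — and then to invoke the description of $\mathcal T$ recalled in Section 4 (an operator lies in $\mathcal T$ iff its kernel belongs to $S(\mathbb R^2)$). The only substantive step is a one–dimensional computation. Starting from Rodrigues' formula $H_k(t)e^{-t^2}=(-1)^k\frac{d^k}{dt^k}e^{-t^2}$ and integrating by parts $k$ times, one gets the elementary Gaussian–Hermite integral $\int_{\mathbb R}e^{irt}H_k(t)e^{-t^2}\,dt=(ir)^k\sqrt\pi\,e^{-r^2/4}$. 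Inserting the definition of $g_n$ — whose coefficients $(-i)^{n-2m}$ were put in precisely so as to turn each $(ir)^{n-2m}$ into $r^{n-2m}$, after which the sum collapses by the expansion $t^n=\frac{n!}{2^n}\sum_m\frac{1}{m!(n-2m)!}H_{n-2m}(t)$ read at the argument $r$ — one arrives at the key identity
$$\int_{\mathbb R}e^{irt}g_n(t)e^{-t^2}\,dt=c_n\,r^{\,n}\,e^{-r^2/4}$$
for a nonzero constant $c_n$; equivalently, $g_n(t)e^{-t^2}$ is (up to a constant) the one–dimensional Fourier transform of $r^{\,n}$ times a Gaussian. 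Verifying this identity — that the $(-i)$–powers and the Gaussian weights conspire exactly — is the one place I expect to require care.

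Granting it, put $x=r\cos\varphi$, $y=r\sin\varphi$ and feed $f_{m,n}$ into (\ref{obr}):
$$F(r\cos\varphi,r\sin\varphi)=\int_{\mathbb R}e^{irt}f_{m,n}(t,\varphi)\,dt=\frac{c_n}{\sqrt{2\pi}}\,r^{\,n}e^{-r^2/4}\sin^m\varphi\cos^{n-m}\varphi=\frac{c_n}{\sqrt{2\pi}}\,(r\sin\varphi)^{m}(r\cos\varphi)^{n-m}e^{-r^2/4},$$
which in Cartesian coordinates reads simply $F(x,y)=\frac{c_n}{\sqrt{2\pi}}\,x^{\,n-m}y^{\,m}\,e^{-(x^2+y^2)/4}$. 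The role of the prefactor $\sin^m\varphi\cos^{n-m}\varphi$ together with the precise power $r^{\,n}$ is exactly to cancel the apparent polar singularity, so that $F$ is a genuine polynomial multiplied by a Gaussian; in particular $F\in S(\mathbb R^2)$, and it is $2\pi$–periodic and smooth in $\varphi$ as it must be. (To make the reconstruction through (\ref{obr}) unambiguous one should also note that $f_{m,n}$ satisfies $\omega(X,\varphi+\pi)=\omega(-X,\varphi)$, which holds because $g_n$ has parity $(-1)^n$.)

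To conclude, recall from Section 4 that the correspondence $\rho\leftrightarrow F$ of (\ref{charact}) is the composition of an invertible affine change of variables with a partial Fourier transform, hence a topological isomorphism of $S(\mathbb R^2)$; so from $F\in S(\mathbb R^2)$ one obtains a kernel $\rho\in S(\mathbb R^2)$, and the integral operator $T$ with that kernel lies in $\mathcal T$. Since (\ref{main}) is the composition of (\ref{charact}) and (\ref{tomogr}), which (\ref{obr}) inverts, the optical tomogram of this $T$ equals $f_{m,n}$, i.e. $f_{m,n}\in\hat{\mathcal T}$. As an alternative to this last step one could expand $\sin^m\varphi\cos^{n-m}\varphi$ into the harmonics $e^{ik\varphi}$ with $|k|\le n$, $k\equiv n\ (\mathrm{mod}\ 2)$, and combine Proposition 2 with the linearisation formula for products of Hermite polynomials to write $f_{m,n}$ as a finite linear combination of the tomograms $\omega_{|a><b|}$; the characteristic–function route is cleaner because it avoids this bookkeeping.
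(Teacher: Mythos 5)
Your overall strategy is the same as the paper's: recover the would-be characteristic function through (\ref{obr}), show it is a polynomial times a Gaussian, hence in $S({\mathbb R}^2)$, and pull back through (\ref{charact}) to get a Schwartz kernel. But the step you yourself flagged as the one requiring care is exactly where the argument breaks. With the Rodrigues formula and the weight $e^{-t^2}$ you correctly get $\int_{\mathbb R}e^{irt}H_k(t)e^{-t^2}\,dt=(ir)^k\sqrt{\pi}\,e^{-r^2/4}$, i.e.\ a \emph{monomial} in $r$ on the transform side. After the factors $(-i)^{n-2m}$ cancel the $i^{n-2m}$, you are left with
$$\int_{\mathbb R}e^{irt}g_n(t)e^{-t^2}\,dt=\sqrt{\pi}\,e^{-r^2/4}\,\frac{n!}{2^n}\sum_{m=0}^{[n/2]}\frac{r^{\,n-2m}}{m!\,(n-2m)!},$$
and this does \emph{not} collapse to $c_n r^n$: the identity $t^n=\frac{n!}{2^n}\sum_m\frac{1}{m!(n-2m)!}H_{n-2m}(t)$ turns a sum of \emph{Hermite polynomials} into a power, not a sum of monomials into a power. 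Concretely, for $n=2$ one has $g_2(t)=1-t^2$ and $\int e^{irt}(1-t^2)e^{-t^2}dt=\sqrt{\pi}\bigl(\tfrac12+\tfrac{r^2}{4}\bigr)e^{-r^2/4}$, whose constant term survives. That surviving lower-order term is fatal for your conclusion: it contributes pieces of the form $r^{\,n-2m}\sin^m\varphi\cos^{n-m}\varphi\,e^{-r^2/4}$ with too few powers of $r$, which in Cartesian coordinates are of the type $x^{n-m}y^m/(x^2+y^2)^m$ times a Gaussian --- precisely the non-smooth-at-the-origin behaviour the paper's Example exhibits to show a function is \emph{not} in $S({\mathbb R}^2)$. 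So along your route $F\in S({\mathbb R}^2)$ is not established.

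The intended mechanism (and the way the paper's stated answer $F_{m,n}(x,y)=x^{n-m}y^m e^{-(x^2+y^2)/2}$ actually comes out; the $e^{-t^2}$ in the paper's definition of $f_{m,n}$ is a normalization slip, as its own intermediate formula also betrays) is the Fourier \emph{eigenfunction} property of Hermite functions with the Gaussian $e^{-t^2/2}$:
$$\int_{\mathbb R}e^{irt}H_k(t)e^{-t^2/2}\,dt=\sqrt{2\pi}\,i^{\,k}H_k(r)\,e^{-r^2/2}.$$
With this, the transform reproduces \emph{Hermite polynomials} in $r$, the coefficients $(-i)^{n-2m}$ cancel the eigenvalues $i^{\,n-2m}$, and now the expansion identity legitimately collapses the sum, giving $\int e^{irt}g_n(t)e^{-t^2/2}\,dt=\sqrt{2\pi}\,r^n e^{-r^2/2}$ (check: $g_2(t)=1-t^2$ gives exactly $\sqrt{2\pi}\,r^2e^{-r^2/2}$). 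Then $F_{m,n}(x,y)=x^{n-m}y^m e^{-(x^2+y^2)/2}\in S({\mathbb R}^2)$ and your concluding paragraph (the affine-change-plus-partial-Fourier isomorphism of $S({\mathbb R}^2)$, and the parity remark) goes through unchanged. So the gap is confined to the key identity, but as written it is a genuine error, not a routine verification.
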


\begin{proof}

Let us find the characteristic function $F_{m,n}(x,y)$ corresponding to $f_{m,n}$.
Notice that
$$
F_{m,n}(r\cos\varphi,r\sin\varphi )=\int \limits _{\mathbb R}e^{irt}f_{m,n}(t,\varphi)dt=t^ne^{-t^2}\sin ^m\varphi \cos ^{n-m}\varphi.
$$
Hence,
$$
F_{m,n}(x,y)=x^{n-m}y^me^{-\frac {x^2+y^2}{2}}.
$$
The functions $F_{m,n}\in S({\mathbb R}^2)$. Moreover, their linear combinations form a dense set in $S({\mathbb R}^2)$. 

\end{proof}

Below we need the family of functions $\{Q_{m,n}(\varphi ),\ 0\le m\le n\}$ which are biorthogonal to $\{\sin ^m\varphi \cos ^{n-m}\varphi\}$
in its linear envelope \cite {Amo1}. Denote ${\mathcal V}_N$ the linear space spanned by functions $\{f_{m,2N},\ 0\le m\le N\}$. Let ${\mathcal P}_N$ be
the orthogonal projection on the subspace ${\mathcal V}_N$. Define a family of seminorms on the set $\hat {\mathcal T}$ by the formula

\begin{equation}\label{seminorm}
||\omega ||_{m,N}=\left |\int \limits _{0}^{2\pi }\int \limits _{0}^{+\infty}X^{2N}{\mathcal P}_N(\omega (X,\varphi ))Q_{m,2N}(\varphi)d\varphi dX\right |,\ 
\end{equation}
$N=0,1,2,\dots $ 

\begin{theorem*} The map $T\to \omega _T(X,\varphi )$ is continuous with respect to the family of seminorms (\ref {seminorm}).
\end{theorem*}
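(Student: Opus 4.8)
The plan is to show that each seminorm $\|\omega_T\|_{m,N}$ from (\ref{seminorm}) is, up to a multiplicative constant depending only on $m,N$, equal to (or bounded by) a finite sum of the Schwartz-operator seminorms (\ref{normy}) applied to $T$; since $\mathcal T$ carries the Fréchet topology generated by (\ref{normy}), this gives continuity of $T\mapsto\omega_T$. The natural route is through the duality pairing. First I would recall from the Proposition that $f_S(X,\varphi)$, for $S\in\mathcal T$ with kernel $\rho_S$, is given by (\ref{F}) and satisfies the duality (\ref{mula2}). The seminorm (\ref{seminorm}) is visibly a pairing of $\omega_T$ against the fixed, explicit function $X^{2N}Q_{m,2N}(\varphi)$ after applying the projection ${\mathcal P}_N$; because ${\mathcal P}_N$ is self-adjoint with respect to the integration measure $d\varphi\,dX$ on $[0,2\pi)\times\mathbb R_{+}$, one may move ${\mathcal P}_N$ onto the test function, so that
$$
\|\omega_T\|_{m,N}=\left|\int_0^{2\pi}\int_0^{+\infty}\omega_T(X,\varphi)\,{\mathcal P}_N\!\left(X^{2N}Q_{m,2N}(\varphi)\right)dX\,d\varphi\right|.
$$
Now ${\mathcal P}_N$ projects onto the span of $\{f_{m,2N}\}$, and by the previous Proposition each $f_{m,2N}\in\hat{\mathcal T}$ is the tomogram-dual image of an explicit Schwartz operator — indeed $f_{m,n}$ has characteristic function $x^{n-m}y^m e^{-(x^2+y^2)/2}$, hence corresponds to a fixed operator $S_{m,n}\in\mathcal T$ (a linear combination of the $|j\rangle\langle k|$, with rapidly decaying kernel). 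So the displayed integral equals $\sum_{m'} c_{m'} \int \omega_T f_{S_{m',2N}}$ for suitable constants $c_{m'}$, which by (\ref{mula2}) equals $\sum_{m'} c_{m'}\,\mathrm{Tr}(T S_{m',2N})$.

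The second step is then to estimate $|\mathrm{Tr}(TS_{m',2N})|$ by the seminorms (\ref{normy}). Writing the trace in the coordinate representation, $\mathrm{Tr}(TS)=\int\!\int \rho_T(x,y)\rho_S(y,x)\,dx\,dy$, and using that $\rho_T(x,y)=\langle \delta_x|T|\delta_y\rangle$ can be controlled after pairing against a Schwartz function: more precisely, since $S_{m',2N}$ has a Schwartz kernel $\sigma(y,x)$, one has $\mathrm{Tr}(TS_{m',2N})=\int \big(T\sigma(\cdot,x)\big)(x)\,dx$ where $\sigma(\cdot,x)\in S(\mathbb R)$ for each $x$; by repeated integration by parts in $x$ one can insert factors of $q^a p^b$ acting on $T$ and on the Schwartz vectors, bounding the whole expression by a finite sum $\sum_{a,b} C_{a,b}\,\sup_x\|q^a p^b T\,\psi_{b,x}\|_H \cdot (\text{decay of }\sigma)$ for suitable $\psi_{b,x}\in S(\mathbb R)$ whose Schwartz seminorms are uniformly bounded in $x$ (this is exactly the mechanism by which a Schwartz-kernel operator pairs continuously with $\mathcal T$, essentially the statement from \cite{Werner} that polynomials $P(q,p)$, and more generally trace against Schwartz operators, lie in $\mathcal T'$). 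Collecting constants yields $\|\omega_T\|_{m,N}\le \sum_{\text{finite}} C\,\|T\|_{a,b,\psi}$, which is the desired continuity.

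I would structure the write-up as: (1) self-adjointness of ${\mathcal P}_N$ and transferring it to the test function; (2) identifying ${\mathcal P}_N(X^{2N}Q_{m,2N}(\varphi))$ as a finite linear combination of the $f_{m',2N}$, hence of dual symbols $f_{S_{m',2N}}$ of explicit operators $S_{m',2N}\in\mathcal T$; (3) invoking (\ref{mula2}) to rewrite the seminorm as $|\sum c_{m'}\mathrm{Tr}(TS_{m',2N})|$; (4) the trace estimate against (\ref{normy}). The main obstacle is step (4): making the integration-by-parts bookkeeping clean enough to exhibit genuine seminorms of the form $\|q^np^mT\psi\|_H$ with test vectors $\psi$ independent of (or uniformly controlled in) the integration variable $x$. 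One must be careful that differentiating $\rho_T$ in the coordinate variables corresponds to commutators/products with $q$ and $p$ in the right order, and that the Gaussian-times-polynomial kernels of the $S_{m',2N}$ supply enough decay to absorb the resulting $x$-dependence; this is where the bulk of the technical work lies, but it is a standard Schwartz-space argument once the algebraic identifications in steps (1)–(3) are in place.
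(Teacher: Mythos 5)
Your steps (1)--(2) are fine in outline, but step (3) contains the decisive error: you conflate the tomographic map $T\mapsto\omega_T$ with the dual map $S\mapsto f_S$. What the paper's Proposition about $f_{m,n}$ proves is that these functions lie in $\hat{\mathcal T}$, i.e.\ they are \emph{tomograms} $\omega_{S_{m,n}}$ of explicit Schwartz operators (that is exactly what the computation of the characteristic function $x^{n-m}y^m e^{-(x^2+y^2)/2}$ establishes); they are \emph{not} the dual symbols $f_{S_{m,n}}$ defined by (\ref{F}). The duality (\ref{mula2}) pairs a tomogram against a dual symbol, and the Fourier-side identity that makes it work is $\frac1{2\pi}\int e^{-iXr}f_S(X,\varphi)\,dX=rF_S(r\cos\varphi,r\sin\varphi)$, whereas a tomogram satisfies $\int e^{irX}\omega_S(X,\varphi)\,dX=F_S(r\cos\varphi,r\sin\varphi)$ without the Jacobian factor $r$. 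Hence $\int\omega_T\,f_{m',2N}\,dX\,d\varphi$ is a pairing of two tomograms and does not equal $Tr(TS_{m',2N})$; (\ref{mula2}) simply does not apply. Moreover, the object that the test function $X^{2N}Q_{m,2N}(\varphi)$ corresponds to under the duality is not a Schwartz operator at all: as in \cite{Amo1}, and as in the paper's own proof, it is a polynomial $P_{2N}(q,p)$, an element of ${\mathcal T}'$ but not of $\mathcal T$, so your trace estimate in step (4), however carefully carried out, is aimed at the wrong object.

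The paper's argument goes the other way: the moments of the tomogram satisfy $\int X^{2N}\omega_T(X,\varphi)\,dX=Tr\bigl(T\,(q\cos\varphi+p\sin\varphi)^{2N}\bigr)$, so integrating against the biorthogonal $Q_{m,2N}(\varphi)$ (with ${\mathcal P}_N$ producing the finite-dimensional restriction $T_N$) yields the identity $\|\omega_T\|_{m,N}=|Tr(P_{2N}(q,p)T_N)|$ from \cite{Amo1}; continuity then follows because polynomials in $q,p$ are continuous functionals on the space of Schwartz operators by \cite{Werner} --- essentially the estimate $|Tr(P(q,p)T)|\le\sum C\,\|T\|_{n,m,\psi}$ you were attempting in step (4), but for the polynomial observable rather than for an auxiliary Schwartz operator. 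To repair your write-up you must either follow that route, or prove directly that pairing $\omega_T$ against the fixed rapidly decreasing function ${\mathcal P}_N\bigl(X^{2N}Q_{m,2N}(\varphi)\bigr)$ is continuous in the seminorms (\ref{normy}), which in practice amounts to the same moment/trace computation.
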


\begin{proof}

By a definition, $T\to \omega _T(X,\phi )$ determines a map of the space of Schwartz operators  ${\mathcal T}$ to the set $\hat {\mathcal T}$.
Notice that \cite {Amo1}
$$
||\omega _T||_{m,N}=|Tr(P_{2N}(q,p)T_N)|,
$$
where $P_{2N}$ is some polynomial of degree $2N$ from the position and momentum operators $q$ and $p$, while $T_N$ is a restriction of
the density operator to the finite dimensional subspace spanned by $N$ successive excited states of quantum oscillator. 
Hence, the map $T\to \omega _T(X,\varphi)$ is continuous. 

\end{proof}

Theorem implies that the following Corollary holds true.

\begin{corollary*} The map $S\to f_S(X,\phi )$ determines the representation of ${\mathcal T}'$ in the space
$\hat {\mathcal T}'$ equipped with the system of seminorms (\ref {seminorm}).
\end{corollary*}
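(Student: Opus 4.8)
The plan is to read the Corollary as the transpose counterpart of the Theorem. Write $\Phi$ for the tomographic map $T\mapsto\omega _T$. By construction $\hat{\mathcal T}=\Phi(\mathcal T)$, and $\Phi$ is injective: if $\omega _T\equiv 0$ then the reconstruction formula (\ref{obr}) forces $F_T\equiv 0$, hence $\rho _T\equiv 0$ and $T=0$. Thus $\Phi:\mathcal T\to\hat{\mathcal T}$ is a linear bijection, and by the Theorem it is continuous when $\hat{\mathcal T}$ carries the seminorms (\ref{seminorm}). The duality relation (\ref{mula2}) then says precisely that for each $S$ the function $f_S$ represents, through the bilinear pairing $\langle\omega ,f\rangle=\int _0^{2\pi}\!\int _0^{+\infty}\omega (X,\varphi )f(X,\varphi )\,dX\,d\varphi$, the linear functional $\omega\mapsto\mathrm{Tr}(\Phi^{-1}(\omega )\,S)$ on $\hat{\mathcal T}$; in other words $S\mapsto f_S$ is $\Phi^{-1}$ followed by the canonical identification of $\mathcal T'$ with the space of functionals on $\hat{\mathcal T}$.

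The first substantive step is to upgrade ``$\Phi$ continuous'' to ``$\Phi$ is a topological isomorphism''. The space $\mathcal T$ is Fr\'echet (\cite{Werner}). The family (\ref{seminorm}) is countable (indexed by $N$ and $0\le m\le 2N$) and, one checks, separating on $\hat{\mathcal T}$ --- this uses that the functions $F_{m,n}(x,y)=x^{n-m}y^{m}e^{-(x^2+y^2)/2}$ have dense linear span in $S({\mathbb R}^2)$ and that pairing with $Q_{m,2N}(\varphi )$ undoes the angular dependence --- so $(\hat{\mathcal T},\{\|\cdot\|_{m,N}\})$ is metrizable. Completeness is obtained by transporting Cauchy sequences back along the identity $\|\omega _T\|_{m,N}=|\mathrm{Tr}(P_{2N}(q,p)T_N)|$ from the proof of the Theorem together with the Fr\'echet property of $\mathcal T$. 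Hence $\Phi$ is a continuous linear bijection between Fr\'echet spaces, and the open mapping theorem gives that $\Phi^{-1}:\hat{\mathcal T}\to\mathcal T$ is continuous; so $\Phi$ is a topological isomorphism.

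Now pass to transposes. The transpose $\Phi^{t}:\hat{\mathcal T}'\to\mathcal T'$ of a topological isomorphism of locally convex spaces is itself a topological isomorphism for the strong (and weak-$*$) dual topologies, with inverse $(\Phi^{t})^{-1}=(\Phi^{-1})^{t}$. By the duality relation (\ref{mula2}), $(\Phi^{-1})^{t}$ sends $S\in\mathcal T'$ to the continuous functional on $(\hat{\mathcal T},\{\|\cdot\|_{m,N}\})$ given by integration against $f_S$; for $S\in\mathcal T$ this is exactly the functional exhibited by the explicit kernel (\ref{F}) of the Proposition, and for general $S\in\mathcal T'$ it serves as the definition of $f_S$. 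Therefore $S\mapsto f_S$ coincides with $(\Phi^{-1})^{t}$, hence is a topological isomorphism of $\mathcal T'$ onto $\hat{\mathcal T}'$; this is the asserted representation, and in particular it carries every polynomial $P(q,p)$, viewed in $\mathcal T'$ via $T\mapsto\mathrm{Tr}(TP(q,p))$, to a well-defined element of $\hat{\mathcal T}'$.

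The only place where real work is needed is the isomorphism claim of the second paragraph --- verifying that the finitely-many-per-level seminorms (\ref{seminorm}) genuinely generate the Fr\'echet topology that $\Phi$ transports from $\mathcal T$ (equivalently, that $\hat{\mathcal T}$ is complete for them), rather than a strictly coarser topology; once $\Phi$ is known to be a topological isomorphism, the Corollary is a formal statement about transposes. A secondary point to handle carefully is continuity of $\Phi^{-1}$ against the exact list $\{\|\cdot\|_{m,N}:0\le m\le 2N\}$ used in (\ref{seminorm}), since not all $\|\cdot\|_{m,n}$ appear there; this is precisely where the biorthogonality of the $Q_{m,2N}$ and the projections $\mathcal P_N$ do the bookkeeping.
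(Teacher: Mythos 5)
Your overall strategy (prove that $\Phi:T\mapsto\omega_T$ is a topological isomorphism of Fr\'echet spaces and then take transposes) is much stronger than what the Corollary asserts, and the place where you park ``the only real work'' is exactly where the argument breaks. To invoke the open mapping theorem you must first show that $\hat{\mathcal T}$ with the seminorms (\ref{seminorm}) is a Fr\'echet space, i.e.\ Hausdorff and complete. Your completeness argument --- ``transporting Cauchy sequences back along the identity $\|\omega_T\|_{m,N}=|\mathrm{Tr}(P_{2N}(q,p)T_N)|$'' --- is circular: to pull a sequence that is Cauchy for the seminorms (\ref{seminorm}) back to a sequence that is Cauchy in the Fr\'echet topology of $\mathcal T$ you need precisely the continuity of $\Phi^{-1}$, which is what the open mapping theorem was supposed to deliver only after completeness is known. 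The separating property is also merely asserted (``one checks''), and it is far from clear: each seminorm only sees the finite-rank projection ${\mathcal P}_N\omega$ paired against $X^{2N}Q_{m,2N}(\varphi)$, so the family captures finitely many specific moments per level $N$, and density of the $F_{m,n}$ in $S({\mathbb R}^2)$ does not by itself show that these moments separate points of $\hat{\mathcal T}$. So the isomorphism claim, and with it your transpose argument, is not established.

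The paper's own route is lighter and avoids all of this, because the Corollary only claims that each $S\in{\mathcal T}'$ (in particular every polynomial $P(q,p)$) is represented by a \emph{continuous} functional $\omega\mapsto\int_0^{2\pi}\int_0^{+\infty}\omega\,f_S\,dX\,d\varphi$ on $(\hat{\mathcal T},\{\|\cdot\|_{m,N}\})$; it does not claim that $S\mapsto f_S$ is onto $\hat{\mathcal T}'$, nor that $\Phi^{-1}$ is continuous. The seminorms (\ref{seminorm}) were built exactly as the absolute values of the dual pairings: by the identity used in the proof of the Theorem, $\|\omega_T\|_{m,N}=|\mathrm{Tr}(P_{2N}(q,p)T_N)|$, i.e.\ the pairing of $\omega_T$ with the symbol of a polynomial observable is one of the defining seminorms. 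Hence integration against $f_{P(q,p)}$ (and, by taking the corresponding combinations, against $f_S$ for the relevant $S\in{\mathcal T}'$) is bounded by the seminorms and so lies in $\hat{\mathcal T}'$ essentially by construction --- no open mapping theorem, no completeness or Hausdorffness of $\hat{\mathcal T}$, and no continuity of $\Phi^{-1}$ is needed. Note also the direction of duality: continuity of $\Phi$ alone only gives the transpose $\hat{\mathcal T}'\to{\mathcal T}'$; to get the claimed direction ${\mathcal T}'\to\hat{\mathcal T}'$ one must use the specific structure of the seminorms as above (the paper's way) or prove continuity of $\Phi^{-1}$ (your way, which remains unproven).
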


\section*{Acknowledgments}

The work is supported by the grant of Russian Science Foundation under project No 14-21-00162 and fulfilled in Steklov Mathematical Institute of Russian Academy of Ssciences.

 \end{document}